\title{Quick Minimization of Tardy Processing Time \\ on a Single Machine}
\author{Baruch Schieber \thanks{Dept. of Computer Science, New Jersey Institute of Technology, Newark, NJ 07102, USA. \email{sbar@njit.edu}} \hspace{4em} Pranav Sitaraman \thanks{Edison Academy Magnet School, Edison, NJ 08837, USA. \email{sitaraman.pranav@gmail.com}}}
\date{}
\begin{document}
\maketitle
\begin{abstract}
We consider the problem of minimizing the total processing time of tardy jobs on a single machine. This is a classical scheduling problem, first considered by [Lawler and Moore 1969], that also generalizes the Subset Sum problem. Recently, it was shown that this problem can be solved efficiently by computing {\MMS}-convolutions. The running time of the resulting algorithm is equivalent, up to logarithmic factors, to the time it takes to compute a {\MMS}-convolution of two vectors of integers whose sum is $\bigo{P}$, where $P$ is the sum of the jobs' processing times. We further improve the running time of the minimum tardy processing time computation by introducing a job ``bundling" technique and achieve a $\bigot{P^{2-1/\alpha}}$ running time, where $\bigot{P^\alpha}$ is the running time of a {\MMS}-convolution of vectors of size $P$. This results in a $\bigot{P^{7/5}}$ time algorithm for tardy processing time minimization, an improvement over the previously known $\bigot{P^{5/3}}$ time algorithm.
\end{abstract}

\section{Introduction}
\label{sec:intro}
The input to the Minimum Tardy Processing Time (\MTPT) Problem consists of $n$ jobs each of which is associated with a due date and processing time $p_i \in \mathbb{N}$. Consider a (nonpreemptive) schedule of these jobs on a single machine that can execute only one job at a time. A job is \emph{tardy} if it terminates after its due date. The {\MTPT} Problem is to find a schedule of the jobs that minimizes the total processing time of the tardy jobs. In the standard scheduling notation the {\MTPT} problem is denoted $1||\sum p_jU_j$.

Consider an instance of {\MTPT} in which all the jobs have the same due date $d$. Let $P=\sum_{j=1}^n p_j>d$. The decision whether the total processing time of the tardy jobs is exactly $P-d$ (which is optimal in this case) is equivalent to finding whether there exists a subset of the jobs whose processing time sums to $d$. This is equivalent to the Subset Sum problem. It follows that {\MTPT} is NP-hard. {\MTPT} is weakly NP-hard and Lawler and Moore~\cite{LM1969} gave an $\bigo{P\cdot n}$ time algorithm for this problem.

Bringmann~\etal~\cite{alg2020} introduced a new convolution variant called a {\MMS}-convolution. They gave an algorithm for {\MTPT} that uses {\MMS}-convolutions, and proved that up to logarithmic factors, the running time of this algorithm is equivalent to the time it takes to compute a {\MMS}-convolution of integers that sum to $\bigo{P}$. They also gave an $\bigot{P^{7/4}}$ time algorithm\footnote{The notation $\bigot{\cdot}$ hides all logarithmic factors.} for computing a {\MMS}-convolution of integers that sum to $\bigo{P}$, which results in an $\bigot{P^{7/4}}$ time algorithm for the {\MTPT} problem. Klein \etal~\cite{conv2022} further improved the algorithm for computing a {\MMS}-convolution and achieved an $\bigot{P^{5/3}}$ running time, and thus an $\bigot{P^{5/3}}$ time algorithm for the {\MTPT} problem.

A natural approach to further improve the {\MTPT} algorithm is by improving the running time of a {\MMS} computation. However, obtaining an $\smallot{P^{3/2}}$ time algorithm for computing a {\MMS}-convolution seems difficult as this would imply an improvement to the best known (and decades old) algorithm for computing a {\MMConv}-convolution~\cite{K1989}. We were able to ``break'' the $\bigot{P^{3/2}}$ barrier by introducing a job ``bundling" technique. Applying this technique in conjunction with the best known algorithm for computing a {\MMS}-convolution yields an $\bigot{P^{7/5}}$ time algorithm for the {\MTPT} problem. This algorithm outperforms Lawler and Moore’s algorithm~\cite{LM1969} in instances where $n=\smallomegat{P^{2/5}}$. In general, applying our technique in conjunction with an $\bigot{P^\alpha}$ time algorithm for computing a {\MMS}-convolution yields an $\bigot{P^{2-{1}/{\alpha}}}$ time for the {\MTPT} problem.

The rest of the paper is organized as follows. In Section~\ref{sec:prelim} we introduce our notations and describe the prior work that we apply in our algorithm. Section~\ref{sec:alg} specifies our algorithm, and Section~\ref{sec:concl} has some concluding remarks.

\section{Preliminaries}
\label{sec:prelim}
Our notations follow the notations in~\cite{alg2020}. The input to the {\MTPT} problem is a set of $n$ jobs $\calJ = \lrc{J_1, J_2, \ldots, J_n}$. Each job $J_j\in \calJ$ has due date $e_j \in \mathbb{N}$ and processing time $p_j \in \mathbb{N}$. Let $D_\#$ denote the number of distinct due dates, and denote the monotone sequence of distinct due dates by $d_1< d_2< d_3<  \cdots< d_{D_\#}$, with $d_0 = 0$. Let $\calJ_k \subseteq \calJ$ be the set of jobs with due date $d_k$. Let $D = \sum_{i = 1}^{D_\#} d_i$ and $P = \sum_{i = 1}^{n} p_i$. For any $I=\lrc{i,\ldots,j}$, where $1\le i\le j\le D_\#$, let $\calJ_I = \bigcup_{i \in I} \calJ_i$ and $P_I = \sum_{J_i\in \calJ_I} p_i$.

Recall that the goal is to schedule the jobs in $\calJ$ so that the total processing time of tardy jobs is minimized. Since we only consider non-preemptive schedules, any schedule $S$ corresponds to a permutation $\sigma_S: \{1,\ldots,n\} \to \{1,\ldots,n\}$ of the job indices. The completion time of job $J_j \in \calJ$ in schedule $S$ is $C_j = \sum_{\sigma_S(i) \leq \sigma_S(j)} p_i$, and $j$ is tardy in $S$ if $C_j > e_j$. Therefore, we can consider that our algorithm seeks to minimize $\sum_{J_j\in \calJ, C_j > e_j} p_j$.

Next, we recall the definition of convolutions and describe the techniques developed in~\cite{alg2020} and used by our algorithm.
Given two vectors $A$ and $B$ of dimension $n+1$ and two binary operations $\circ$ and $\bullet$, the $(\circ,\bullet)$-convolution applied on $A$ and $B$ results in a $2n+1$ dimensional vector $C$, defined as:
\[
C[k] = \bigcirc_{i=\max\lrc{0, k-n}}^{\min\lrc{k,n}} A[i]\bullet B[k-i],\ \forall\ k\in \lrc{0,\ldots,2n}.
\]

A {\MMS}-convolution applied on $A$ and $B$ results in a $2n+1$ dimensional vector $C$, defined as:
\[
C[k] = \max_{i=\max\lrc{0,k-n}}^{\min\lrc{k,n}} \min\lrc{A[i],B[k-i]+k},\ \forall\ k\in \lrc{0,\ldots,2n}.
\]

Bringmann~\etal~\cite{alg2020} apply an equivalent form of {\MMS}-convolution defined as
\[
C[k] = \max_{i=\max\lrc{0,k-n}}^{\min\lrc{k,n}} \min\lrc{A[i],B[k-i]-i},\ \forall\ k\in \lrc{0,\ldots,2n}.
\]
Below, we use this equivalent form as well.

Let $X$ and $Y$ be two integral vectors. Define the \emph{sumset} $X \oplus Y = \{ x+y : x \in X, y \in Y\}$. It is not difficult to see that the sumset can be inferred from a {\PMConv}-convolution of $X_1$ and $X_2$ which can be computed in $\bigot{P}$ time for $X,Y \subseteq \{0,\ldots,P\}$ as in~\cite{AHU1974}.

The set of all subset sums of entries of $X$, denoted $\calS(X)$, is defined as $\calS(X)= \{\sum_{x \in Z} x : Z \sse X\}$. These subset sums can be calculated in $\bigot{\sum_{x \in X}x}$ time by successive computations of sumsets~\cite{KX2017}. We note that we always have $0 \in \calS(X)$. Define the $t-$\emph{prefix} and $t-$\emph{suffix} of $\calS(X)$ as
$\pref(\calS,t)=\lrc{x\in \calS(X) \land x\le t}$ and $\suff(\calS,t)=\lrc{x\in \calS(X) \land x> t}$.

We say that a subset of jobs $\calJ'\sse \calJ$ can be scheduled \emph{feasibly} starting at time $t$ if there exists a schedule of these jobs starting at time $t$ such that all jobs are executed by their due date. Note that it is enough to check whether all jobs in $\calJ'$ are executed by their due date in the \emph{earliest due date first} (EDD) schedule of these jobs starting at $t$.

For a consecutive subset of indices $I=\{i_0,i_0+1,\ldots,i_1\}$, with $1\le i_0\le i_1 \le D_\#$, define an integral vector $M(I)$ as follows. The entry $M(I)[x]$ equals $-\infty$ if none of the subsets of jobs in $\calJ_I$ with total processing time exactly $x$ can be scheduled feasibly. Otherwise, $M(I)[x]$ equals the latest time $t$ starting at which a subset of jobs in $\calJ_I$ with total processing time $x$ can be scheduled feasibly. Applying the algorithm for {\MMS}-convolutions given in~\cite{conv2022}, we get an $\bigot{{P_I}^{5/3}}$ time algorithm for computing $M(I)$, where $P_I = \sum_{J_i\in \calJ_I} p_i$.

In addition to the algorithm that uses {\MMS}-convolutions, Bringmann~\etal~\cite{alg2020} gave a second algorithm for the {\MTPT} problem. The running time of this algorithm is $\bigot{P\cdot D_\#}$. We use a version of this algorithm in our algorithm and for completeness we describe it in Algorithm~\ref{alg:sumset-scheduler}.

\begin{algorithm}[!ht]
  \caption{The $\bigot{P\cdot D_\#}$ time algorithm}
  \label{alg:sumset-scheduler}
  \begin{algorithmic}[1]
      \State Let $d_{1} < \ldots < d_{D_{\#}}$ denote the different due dates of jobs in $\calJ$.
      \For{$i=1,\ldots,D_\#$}
        \State Compute $X_i = \{p_j : J_j \in \calJ_i\}$
        \State Compute $\calS(X_i)$
      \EndFor
      \State Let $S_0=\emptyset$.
      \For{$i=1,\ldots,D_\#$}
        \State Compute $S_i=S_{i-1} \oplus \mathcal{S}(X_i)$.
        \State Remove any $x \in S_i$ with $x > d_i$.
      \EndFor
      \State Return $P-x$, where $x$ is the maximum value in $S_{D_{\#}}$.
  \end{algorithmic}
\end{algorithm}

\section{The Algorithm}
\label{sec:alg}
We define job bundles by coloring due dates in \emph{red} and \emph{blue}. The blue due dates are the bundled ones.

Choose some $\delta\in (0,1)$. For each $k = 1, 2, \ldots D_\#$, color the due date $d_k$ red if $\sum_{J_i \in \calJ_k} p_i > P^{1 - \delta}$. To determine the bundles we repeat the following procedure until all due dates are colored.

Let $m$ be the largest index for which due date $d_m$ is not yet colored. Find the smallest $k < m$ that satisfies the following conditions.
\begin{itemize}[align=left]
  \item[\textbf{Condition 1:}]
  None of the due dates $d_k,\ldots, d_m$ are colored red.
  \item[\textbf{Condition 2:}]
  $\sum_{i = k}^{m} \sum_{J_j \in \calJ_i} p_j \leq P^{1 - \delta}$
\end{itemize}
Color all due dates $d_k, d_{k + 1}, \ldots, d_m$ blue and ``bundle" them into one group, denoted $\bundle{k}{m}$. We say that due date $d_k$ is the \emph{start} of the bundle and $d_m$ is the \emph{end} of the bundle.

\begin{lemma} \label{lem:bundle}
  The number of red due dates is  $\bigo{P^{\delta}}$ and the number of bundles is $\bigo{P^{\delta}}$.
\end{lemma}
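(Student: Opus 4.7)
The plan is to bound the red due dates and the bundles separately. The red due date bound is immediate: each red $d_k$ satisfies $\sum_{J_i \in \calJ_k} p_i > P^{1-\delta}$ by definition, and since these per-due-date weights sum to $P$, there are at most $P^\delta$ red due dates.

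For the bundle bound, I would use a charging argument. Order the bundles left-to-right as $B_1, \ldots, B_t$ with $B_j = \bundle{k_j}{m_j}$, and exploit the following structural observation: since the procedure always selects the largest uncolored index as $m$, bundles are actually created in right-to-left order, so at the moment $B_j$ is formed the index $d_{k_j - 1}$ (if it exists) is either red or still uncolored---never already blue. Because $k_j$ is the smallest index making Conditions~1 and~2 both hold, extending the bundle to $k_j - 1$ must violate at least one, giving two cases: (a) $d_{k_j - 1}$ is red, violating Condition~1; or (b) Condition~2 fails, meaning $\sum_{J_l \in \calJ_{k_j - 1}} p_l + w(B_j) > P^{1-\delta}$, where $w(B_j)$ denotes the total processing time of jobs in $B_j$.

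I would then charge each $B_j$ (for $j \ge 2$) as follows. In case~(a), charge $B_j$ to the red due date $d_{k_j - 1}$. In case~(b), at least one summand on the left of the inequality exceeds $P^{1-\delta}/2$: if $w(B_j) > P^{1-\delta}/2$, charge $B_j$ to itself as a \emph{heavy} bundle; otherwise $\calJ_{k_j - 1}$ is a \emph{heavy} due date with total weight exceeding $P^{1-\delta}/2$, and I charge $B_j$ to it. Because distinct bundles have distinct starting indices, the map $B_j \mapsto d_{k_j - 1}$ is injective, so each target absorbs at most one charge. Each of the three categories---red due dates, heavy bundles, and heavy due dates---has size $O(P^\delta)$ by the same total-weight argument (bundle weights sum to at most $P$; due date weights sum to $P$). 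Adding the at most one leftmost bundle with $k_1 = 1$ completes the bound.

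The main step to justify carefully is the structural observation that $d_{k_j - 1}$ is not already blue at the instant $B_j$ is formed. This follows from the procedure's right-to-left order of creation: at $B_j$'s creation every previously formed bundle lies strictly to the right of $B_j$, so no blue index can sit just below $k_j$, and consequently the failure of the extension to $k_j - 1$ cleanly splits into cases (a) and (b) above.
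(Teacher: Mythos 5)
Your proof is correct and follows essentially the same route as the paper: both rest on the dichotomy that, by minimality of $k$, extending a bundle to $d_{k-1}$ must either hit a red due date or push the total weight past $P^{1-\delta}$, and then count via total processing time. The only difference is bookkeeping---you split the overweight case into ``heavy bundle'' versus ``heavy preceding due date'' with a factor-of-$2$ argument, whereas the paper bounds the sum of all extended bundle weights by $2P$ directly by noting each job appears in at most two such sums; your explicit justification that $d_{k-1}$ cannot already be blue (right-to-left creation order) is a point the paper leaves implicit.
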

\begin{proof}
  Clearly, there can be
  at most $P^{\delta}$ due dates with $\sum_{J_i \in \calJ_k} p_i > P^{1 - \delta}$. 
  Consider a bundle $\bundle{k}{m}$. Since $k < m$ is the smallest index that satisfies the two conditions above, it is either true that $d_{k - 1}$ is red or $\sum_{i = k - 1}^{m} \sum_{J_j \in \calJ_i} p_j > P^{1 - \delta}$.
  \begin{enumerate}
    \item[(i)] Since there are at most $P^{\delta}$ red due dates, there can be at most only $P^{\delta}$ bundles $\bundle{k}{m}$ for which $d_{k - 1}$ is red.
    \item[(ii)] Consider the sum $\sum_{i = k - 1}^{m} \sum_{J_j \in \calJ_i} p_j$, for a bundle $\bundle{k}{m}$. Note that $p_j$ of a job $J_j \in \calJ_{k - 1}\cup \calJ_{m}$ may appear in at most one more sum that corresponds to a different bundle, while $p_j$ of a job $J_j \in \bigcup_{i=k}^{m-1}\calJ_i$ cannot appear in any other such sum. Thus, the total of all sums cannot exceed $2P$. It follows that there are at most $2P^{\delta}$  bundles $\bundle{k}{m}$ for which $\sum_{i = k - 1}^{m} \sum_{J_j \in \calJ_i} p_j > P^{1 - \delta}$.
  \end{enumerate}
\end{proof}
Algorithm~\ref{alg:solve} called~\nameref{alg:solve}, given below, follows the structure of Algorithm~\ref{alg:sumset-scheduler} with additional processing of entire bundles that avoids processing each due date in the bundles individually. We prove later that processing a bundle takes $\bigot{P^{(1-\delta)\cdot\alpha}+P}$ time, where $\bigot{P^\alpha}$ is the running time of the algorithm needed for computing a {\MMS}-convolution. Processing each red due date takes $\bigot{P}$ time. Substituting $\delta= 1- \frac 1\alpha$ yields a total running time of $\bigot{P\cdot P^{1-1/\alpha}}=\bigot{P^{2-1/\alpha}}$.

\begin{algorithm}[!ht]
  \caption{\textsc{Solve($\calJ$)}} \label{alg:solve}
  \begin{algorithmic}[1]
    \State Let $T=\lrc{0}$
    \State For each red due date $d_i$, compute $X_i = \lrc{p_j : e_j = d_i}$ and $\mathcal{S}(X_i)$
    \For{$i = 1, \ldots, D_\#$}
      \If {$d_i$ is a red due date}
        \State Compute $T = T \oplus \mathcal{S}(X_i)$ \label{line:case1}
        \State Remove any $x \in T$ with $x > d_i$
      \ElsIf {$d_i$ is the end of some bundle $\bundle{k}{i}$}
        \State Let $I = \lrc{k,\ldots,i}$
        \State Compute the vector $M(I)$. \label{line:case2g}
        \State Let $S_i = \lrc{x \in \lrc{0,\ldots,P_I} : M(I)[x] \ne -\infty}$. \label{line:case2a}
        \If {$d_k-P_I \ge 0$}
          \State Let $T = T \cup (\pref(T,d_k-P_I)\oplus S_i)$. \label{line:case2b}
        \EndIf
        \State Let $M'$ be an integral vector of dimension $d_k$ and initialize $M' = -\infty$. \label{line:case2c}
        \State For each $x\in \suff(T,d_k-P_I)$, let $M'[x]= 0$
        \For {$y= 0,\ldots,d_k-1+P_I$} \label{line:case2e}
          \State Let $C[y] = \max_{x=0}^y \min\{M'[x],M(I)[y-x]-x\}$
        \EndFor \label{line:case2f}
        \State Let $T_i = \lrc{x \in \lrc{0,\ldots, d_k-1+P_I}: C[x] = 0}$ \label{line:case2h}
        \State Let $T = T \cup T_i$\label{line:case2d}
        \State Remove any $x \in T$ with $x > d_i$
      \EndIf
    \EndFor
    \State Return $P-x$, where $x$ is the maximum value in $T$
  \end{algorithmic}
\end{algorithm}

\begin{theorem} \label{thm:solve}
  Algorithm~\nameref{alg:solve} returns the longest feasible schedule that starts at $d_0$.
\end{theorem}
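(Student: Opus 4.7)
The plan is to prove by induction on the main-loop index $i$ the invariant that after iteration $i$, $T$ equals exactly the set of values $x\in\{0,\ldots,d_i\}$ for which some subset of the jobs whose due dates have already been processed (by iteration $i$) has total processing time $x$ and admits a feasible EDD schedule starting at time $0$. During iterations interior to a bundle no update to $T$ occurs, so the invariant is inherited from the last processed iteration. Once the last due date is processed, $T$ is exactly the set of feasible on-time subset sums, its maximum is the length of the longest feasible schedule starting at $d_0$, and $P-\max T$ is the minimum tardy processing time.

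For a red due date $d_i$, the inductive step is essentially that of Algorithm~\ref{alg:sumset-scheduler}: jobs in $\calJ_i$ carry the latest due date seen so far, so they can be appended in EDD order after any previously feasible subset, and $T\oplus\calS(X_i)$ capped at $d_i$ gives precisely the new feasible sums.

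The substantive case is a bundle end $\bundle{k}{i}$ with $I=\lrc{k,\ldots,i}$. A combined sum $x+y$ should enter $T$ iff $x\in T$ (so $x\le d_{k-1}$) and some subset of $\calJ_I$ of sum $y$ is feasibly schedulable starting at time $x$, i.e.\ $x\le M(I)[y]$. I would split on whether $x\le d_k - P_I$. The key sublemma is that $M(I)[y]\ge d_k-y$ whenever $M(I)[y]\ne -\infty$: every bundle job has due date $\ge d_k$, so any feasibly schedulable subset of sum $y$ can also be scheduled starting at time $d_k-y$ in EDD order, with the last job completing exactly at $d_k$ and earlier jobs finishing even sooner, each within its own due date. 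Granting this, Case~1 ($x\le d_k-P_I$) satisfies $x\le d_k-P_I\le d_k-y\le M(I)[y]$ automatically, so the sumset $\pref(T,d_k-P_I)\oplus S_i$ on line~\ref{line:case2b} captures exactly the right combinations. Case~2 ($x>d_k-P_I$) genuinely requires checking $x\le M(I)[y]$, which is precisely what the {\MMS}-convolution on lines~\ref{line:case2e}--\ref{line:case2f} does: with $M'$ the $\{0,-\infty\}$-indicator of $\suff(T,d_k-P_I)$, one has $C[y]=0$ iff there exist $x\in\suff(T,d_k-P_I)$ and $z=y-x$ with $M(I)[z]\ge x$, which is exactly the feasibility condition in this regime.

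The main obstacle will be the two-case split, and in particular the sublemma $M(I)[y]\ge d_k-y$ that justifies replacing a full {\MMS}-convolution by the cheaper sumset in Case~1. Once this is in hand, the remaining verifications---that every sum produced in lines~\ref{line:case2b} and \ref{line:case2d} corresponds to a genuinely feasible schedule (soundness), and conversely that every feasible sum using bundle jobs is produced by one of the two cases (completeness)---are a routine case analysis on $x\le d_k-P_I$ versus $x>d_k-P_I$. A convenient side observation worth recording in the proof is that the final cap at $d_i$ is automatic in the bundle step, since the last-completing bundle job meeting its due date already forces $x+y\le d_i$.
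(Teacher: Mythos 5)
Your proof is correct and follows essentially the same route as the paper's: the same induction invariant on the set $T$ of feasible on-time subset sums, the same split of prefix lengths at $d_k-P_I$, and the same use of the {\MMS}-convolution against the $\{0,-\infty\}$-indicator of $\suff(T,d_k-P_I)$ for the remaining case. Your sublemma $M(I)[y]\ge d_k-y$ is just a pointwise sharpening of the paper's observation that any feasible subset of the bundle's jobs can be started at any time up to and including $d_k-P_I$.
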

\begin{proof}
  Consider iteration $i$ of Algorithm~\nameref{alg:solve}, for an index $i$ such that either $d_i$ is a red due date or $d_i$ is the end of some bundle $\bundle{k}{i}$. To prove the theorem it suffices to prove that at the end of any such iteration $i$ the set $T$ consists of the processing times of all feasible schedules of jobs in $\bigcup_{j=1}^i \calJ_j$ that start at $d_0$. The proof is by induction. The basis is trivial since $T$ is initialized to $\{0\}$. Consider such an iteration $i$ and suppose that the claim holds for all iterations $i'<i$ such that either $d_{i'}$ is a red due date or $d_{i'}$ is the end of some bundle $\bundle{k'}{i'}$. We distinguish two cases.

  \noindent\textbf{Case 1:} $d_i$ is a red due date. In this case it must be that either $d_{i-1}$ is also a red due date or $d_{i-1}$ is the end of some bundle $\bundle{k'}{i-1}$. Thus, by our induction hypothesis, at the start of iteration $i$ the set $T$ consists of the processing times of all feasible schedules of subsets of jobs in $\bigcup_{j=1}^{i-1} \calJ_j$ that start at $d_0$. Since iteration $i$ sets $T = T\oplus S(X_i)$ (Line~\ref{line:case1}), the claim follows.

  \noindent\textbf{Case 2:} $d_i$ is the end of some bundle $\bundle{k}{i}$. By our induction hypothesis, at the start of iteration $i$ the set $T$ consists of the processing times of all feasible schedules of subsets of jobs in $\bigcup_{j=1}^{k-1} \calJ_j$ that start at $d_0$. Let $I=\lrc{k,\ldots,i}$. The maximum length of any feasible schedule of subsets of jobs in $\calJ_I$ is $P_I$. Since the earliest due date of these jobs is $d_k$ we are guaranteed that any such feasible schedule can start at any time up to (and including) $d_k-P_I$ (assuming that $d_k-P_I\ge 0$). By the definition of $M(I)$ the set $S_i = \lrc{x \in \lrc{0,\ldots,P_I} : M(I)[x] \ne -\infty}$ consists of the processing times of all feasible schedules of subsets of jobs in $\calJ_I$ (Line~\ref{line:case2a}). $\pref(T,d_k-P_I)$ consists of the processing times of all feasible schedules of subsets of jobs in $\bigcup_{j=1}^{k-1} \calJ_j$ that start at $d_0$ and end at any time up to (and including) $d_k-P_I$. Since iteration $i$ sets $T = T\cup (\pref(T,d_k-P_I) \oplus S_i)$ (Line~\ref{line:case2b}), after this line $T$ consists of all the feasible schedules of subsets of jobs in $\bigcup_{j=1}^{i} \calJ_j$ that start at $d_0$ and also satisfy the condition that the sum of the lengths of the jobs in $\bigcup_{j=1}^{k-1} \calJ_j$ that are scheduled is at most $d_k-P_I$.

  The set $T$ is still missing the lengths of all the feasible schedules of subsets of jobs in $\bigcup_{j=1}^{i} \calJ_j$ that start at $d_0$ in which the sum of the lengths of the jobs in $\bigcup_{j=1}^{k-1} \calJ_j$ exceeds $d_k-P_I$. These schedules are added to $T$ in Lines~\ref{line:case2c}--\ref{line:case2d} of \nameref{alg:solve}. Consider such a feasible schedule of length $y$ in which the length of the jobs in $\bigcup_{j=1}^{k-1} \calJ_j$ is some $x> d_k-P_I$, which implies that $M'[x]=0$. To complement the prefix of this schedule by a feasible schedule of a subset of jobs in $J_I$ that starts at $x$ and is of length $y-x$ we must have $M(I)[y-x] \ge x$ or $\min\{M'[x],M(I)[y-x]-x\}=0$. Lines~\ref{line:case2e}--\ref{line:case2f} of \nameref{alg:solve} check if such a feasible schedule exists.
\end{proof}

\begin{lemma} \label{lem:time}
  The running time of algorithm~\nameref{alg:solve} is $\bigot{P^{(1-\delta)\cdot\alpha}+P}\cdot P^\delta$.
\end{lemma}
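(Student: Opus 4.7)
The plan is to sum the per-iteration work across the outer loop and use Lemma~\ref{lem:bundle} to observe that only the $\bigo{P^\delta}$ iterations in which $d_i$ is red or the end of a bundle perform any real work; it then suffices to bound each such active iteration by $\bigot{P^{(1-\delta)\alpha}+P}$. I would also note that the preloop construction of all $X_i$ and $\calS(X_i)$ for red due dates fits in $\bigot{P}$ total by~\cite{KX2017}, since the elements of these sets sum to at most $P$.

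For a red iteration, the only nontrivial operation is the sumset $T\oplus\calS(X_i)$ in Line~\ref{line:case1}, which acts on subsets of $\lrc{0,\ldots,P}$ and so costs $\bigot{P}$ via $(+,\vee)$-convolution. For a bundle iteration $\bundle{k}{i}$ with $I=\lrc{k,\ldots,i}$, the bundling rule guarantees $P_I\le P^{1-\delta}$, so Line~\ref{line:case2g} invokes~\cite{conv2022} at cost $\bigot{P_I^\alpha}=\bigot{P^{(1-\delta)\alpha}}$. Every other bundle operation---extracting $S_i$, the sumset in Line~\ref{line:case2b}, the initialization of $M'$, the union following Line~\ref{line:case2d}, and the final pruning---manipulates subsets of $\lrc{0,\ldots,2P}$ and individually costs $\bigot{P}$.

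The main obstacle is Lines~\ref{line:case2e}--\ref{line:case2f}: read literally, $C$ is the {\MMS}-convolution of $M'$ (of nominal dimension $d_k$, possibly as large as $P$) with $M(I)$, which would cost $\bigot{P^\alpha}$ and overshoot the budget. The key observation I would exploit is that $M'[x]$ is $-\infty$ outside $\suff(T,d_k-P_I)$, and since pruning in prior active iterations guarantees $T\sse\lrc{0,\ldots,d_k-1}$ at the start of iteration $i$, the support of $M'$ lies in an interval of length at most $P_I$. Letting $\Delta=\max\lrc{d_k-P_I,0}$ and reindexing by $x\mapsto x-\Delta$ and $y\mapsto y-\Delta$ while subtracting $\Delta$ from every finite entry of $M(I)$ preserves $\min\lrc{M'[x],M(I)[y-x]-x}$ for every $x,y$, so the nontrivial values of $C$ are recoverable from a standard {\MMS}-convolution of two vectors of dimension $\bigo{P_I}$, computable in $\bigot{P_I^\alpha}=\bigot{P^{(1-\delta)\alpha}}$ time via~\cite{conv2022}. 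Multiplying the per-iteration bound by the $\bigo{P^\delta}$ active iterations yields the claimed total.
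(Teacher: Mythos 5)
Your proposal is correct and follows essentially the same route as the paper: charge $\bigot{P}$ to the generic set operations, $\bigot{P_I^\alpha}$ to the computation of $M(I)$, and observe that the support of $M'$ is confined to $(d_k-P_I,\,d_k)$ so that the {\MMS}-convolution in Lines~\ref{line:case2e}--\ref{line:case2f} reduces to vectors of length $\bigo{P_I}=\bigo{P^{1-\delta}}$, then multiply by the $\bigo{P^\delta}$ non-vacuous iterations from Lemma~\ref{lem:bundle}. Your explicit reindexing by $\Delta=\max\lrc{d_k-P_I,0}$ is just a more detailed account of the paper's remark that the $-\infty$ entries of $M'$ can be ignored.
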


\begin{proof}
  By Lemma~\ref{lem:bundle} the number of iterations that are not vacuous is $P^\delta$. It is not difficult to see that all operations other than the computation of the vectors $M(I)$, $C$, and $T_i$ take $\bigot{P}$ time. The vector $M(I)$ is computed as in~\cite{alg2020} in $\bigot{{P_I}^\alpha}$ time. The vector $C$ is also computed via a {\MMS}-convolution and thus its computation time is proportional to the sum of lengths of the vectors $M(I)$ and $M'$ (up to logarithmic factors). Naively, this sum of lengths is $d_k+P_I$. However, since $M'[x]=-\infty$ for all $x\le d_k-P_I$, we can ignore these entries and implement the convolution in $\bigot{{P_I}^\alpha}$ time. Since $M'[x]=-\infty$, for all $x\le d_k-P_I$, we have also $C[x]= -\infty$, and thus $T_i$ can be computed in $\bigo{P_I}$ time (Line~\ref{line:case2h}). Recall that by the definition of bundles $P_I \le P^{1-\delta}$. Thus, the lemma is proved.
\end{proof}

\section{Conclusions}
\label{sec:concl}
We have shown a $\bigot{P^{7/5}}$ time algorithm for tardy processing time minimization, an improvement over the previously known $\bigot{P^{5/3}}$ time algorithm. Improving this bound further is an interesting open problem. In general, by applying our job ``bundling" technique we can achieve a $\bigot{P^{2-1/\alpha}}$ running time, where $\bigot{P^\alpha}$ is the running time of a {\MMS}-convolution of vectors of size $P$. Since it is reasonable to assume that computing a {\MMS}-convolution requires $\bigomegat{P^{3/2}}$ time, our technique is unlikely to yield a $\smallot{P^{4/3}}$ running time. It will be interesting to see whether this running time barrier can be broken, and whether the {\MTPT} problem can be solved without computing a {\MMS}-convolution.

\bibliographystyle{plain}
\bibliography{biblo}
\end{document}